\providecommand{\U}[1]{\protect\rule{.1in}{.1in}}
\newtheorem{theorem}{Theorem}
\newtheorem{conjecture}[theorem]{Conjecture}
\newtheorem{corollary}[theorem]{Corollary}
\newtheorem{definition}[theorem]{Definition}
\newtheorem{proposition}[theorem]{Proposition}
\newenvironment{proof}[1][Proof]{\noindent\textbf{#1.} }{\ \rule{0.5em}{0.5em}}
\begin{document}

\title{ An Upper Bound on the GKS Game via Max Bipartite Matching }
\author{DeVon Ingram\thanks{Email: dingram7@gatech.edu.}\\Georgia Institute of Technology}
\date{}
\maketitle

\begin{abstract}\noindent
The sensitivity conjecture is a longstanding conjecture concerning the relationship between the degree and sensitivity of a Boolean function. In 2015, a communication game was formulated by Justin Gilmer, Michal Kouck\'{y}, and Michael Saks to attempt to make progress on this conjecture. Andrew Drucker independently formulated this game. Shortly after the creation of the GKS game, Nisan Szegedy obtained a protocol for the game with a cost of $O(n^{.4732})$. We improve Szegedy's result to a cost of $O(n^{.4696})$ by providing a technique to identify whether a set of codewords can be used as a viable strategy in this game.
\end{abstract}

\section{Introduction\label{INTRO}}
In the field of query complexity, we are interested in the relationships between various complexity measures on Boolean functions such as deterministic query complexity, randomized query complexity, and polynomial degree. There are several surveys on the topic \cite{bw02}. We know most of these measures to be polynomial related \cite{ns94}. However, it is not known if sensitivity is polynomial related to these complexity measures. The original formulation of the sensitivity conjecture asks whether there is a constant $d$ such that $deg(f)$ = $O$($s(f)^d$). There have been many approaches to settling the sensitivity conjecture \cite{gs004}. Many of the approaches shifts the attention from degree and towards other complexity measures \cite{kk04,gstw16,apv15}. \par
In this paper, we are particularly interested in the communication game formulation of the sensitivity conjecture \cite{sz15,gks15,d17}. One of the main results of \cite{gks15} is that the cost of the game played on $deg(f)$ variables is at most $s(f)$, where $f$ is an arbitrary Boolean function. A polynomial lower bound on the cost of the game implies that the sensitivity conjecture is true. The GKS communication game is played with two players cooperating against a third party, Eve. Eve first sends a permutation of the set of positive integers from 1 to $n$, denoted by [$n$], to Alice. This set represents indices of a binary string which Alice and Eve will both fill in. Once Alice receives the permutation, one at a time, she places either a 0 or 1 at each specified location with no knowledge of the indices she will edit next. Once Alice places a bit at the $n-1^{th}$ index, Eve fills in the last bit and sends the string to Bob, who tries to determine which bit Eve placed. Bob returns a subset $S$ of [$n$] which must contain the index of the bit Eve placed. The cost of the game is then given by $C$($n$) = $|S|$. In 2015, Gilmer, Kouck\'{y}, and Saks \cite{gks15} provided a ${\left\lfloor\sqrt{n}\right\rfloor}$ lower bound for monotone protocols. Later that year, Mario Szegedy \cite{sz15} proved that there is an $O(n^{.4732})$ upper bound on the cost of the game (non-monotone). We improve this upper bound to $O(n^{.4696})$.

\section{Background\label{BACKGROUND}}
We will begin by defining some relevant query complexity measures.

\begin{definition}
Let $f:\{0,1\}^n\rightarrow\{0,1\}$ be a boolean function. The decision tree complexity of $f$ on an input $x$ is the minimum number of input bits that a deterministic classical algorithm needs to query to determine the value of $f$ on the input $x$. The decision tree complexity is the maximum of this quantity over all $x\in\{0,1\}^n$.
\end{definition}

\begin{definition}
Let $f:\{0,1\}^n\rightarrow\{0,1\}$ be a boolean function. The sensitivity of $f$ on input $x$ is the number of indices $i$ such that $f(x) = 1 - f(x')$ where $x'$ represents the input $x$ with the bit $x_i$ flipped. The sensitivity of the function, denoted $s(f)$, is the maximum sensitivity over all inputs.
\end{definition}

\begin{definition}
Let $f:\{0,1\}^n\rightarrow\{0,1\}$ be a boolean function. The degree of $f$ is the number of different variables in the largest monomial in its multilinear polynomial representation.
\end{definition}

\noindent A natural question to ask when presented with multiple different complexity measures is "How are these measures related?". We are familiar with relations between many natural measures already \cite{ns94}. We ultimately want to determine the relationship between sensitivity and the other common complexity measures. One approach to this is studying the relationship between sensitivity and degree.

\begin{conjecture}(Sensitivity Conjecture)
There exists a $d\in\mathbb{R}$ such that $deg(f)\leq s(f)^d$ for all boolean functions $f$.
\end{conjecture}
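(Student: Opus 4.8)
The plan is to avoid working directly with the multilinear polynomial and instead reduce the inequality $deg(f)\leq s(f)^{d}$ to a purely combinatorial statement about the Boolean hypercube $Q_n$, whose vertices are the strings of $\{0,1\}^n$ and whose edges join strings at Hamming distance one. The route suggested by the game in this paper would demand a polynomial \emph{lower} bound on the cost $C(n)$, which runs opposite to the upper bounds pursued here, so I would instead attack the conjecture through its known graph-theoretic reformulation. The bridge is the Gotsman--Linial equivalence: the conjecture holds if and only if there is a constant $c>0$ such that every induced subgraph $H$ of $Q_n$ on strictly more than $2^{n-1}$ vertices contains a vertex of degree at least $n^{c}$. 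Identifying $f$ with the indicator of a vertex subset of $Q_{deg(f)}$ converts the sensitivity of $f$ into a maximum degree inside an induced subgraph, so such a degree bound yields $s(f)\gtrsim deg(f)^{c}$ and hence $d\leq 1/c$. First I would make this reduction precise and confirm that the ``more than half the vertices'' threshold is exactly what the equivalence consumes.

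The heart of the argument is then the hypercube statement, and here I would turn to spectral graph theory rather than counting. The key step is to construct a signed version $A_n$ of the adjacency matrix of $Q_n$, replacing each $1$ by a carefully chosen $\pm 1$, with the single remarkable property that $A_n^2 = nI$. I would build $A_n$ recursively, doubling the dimension at each stage, and check by block multiplication that the off-diagonal cross terms cancel so that the square collapses to $nI$. Granting this, $A_n$ has minimal polynomial dividing $x^2-n$, hence only the eigenvalues $+\sqrt{n}$ and $-\sqrt{n}$; since $A_n$ is symmetric with zero diagonal its trace vanishes, forcing each eigenvalue to have multiplicity exactly $2^{n-1}$. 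Now if $H$ is an induced subgraph on $m>2^{n-1}$ vertices, the corresponding principal submatrix $B$ of $A_n$ is symmetric of size $m\times m$, and Cauchy interlacing gives $\lambda_{\max}(B)\geq \lambda_{2^n-m+1}(A_n)$; because $m>2^{n-1}$ the index $2^n-m+1\leq 2^{n-1}$ lands among the positive eigenvalues, so $\lambda_{\max}(B)\geq \sqrt{n}$.

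To finish I would pass from the spectral radius back to a combinatorial degree bound. The matrix $B$ is obtained from the ordinary adjacency matrix of $H$ only by changing signs of entries, so $\lambda_{\max}(B)$ is at most the spectral radius of $\lvert B\rvert$, the unsigned adjacency matrix of $H$, which is in turn bounded by the maximum degree $\Delta(H)$. Chaining these inequalities gives $\Delta(H)\geq \lambda_{\max}(B)\geq \sqrt{n}$. Feeding $c=1/2$ back through Gotsman--Linial yields $s(f)\geq \sqrt{deg(f)}$, that is, the conjecture with $d=2$, an exponent that is moreover known to be optimal since there are functions with $deg(f)=\Theta(s(f)^2)$. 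I expect the main obstacle to be the construction and verification of the self-squaring signed matrix $A_n$: the recursion must be arranged so that every off-diagonal block contribution to $A_n^2$ vanishes identically, and it is precisely this cancellation, rather than the interlacing step or the reduction, that the earlier purely combinatorial attacks were never able to supply.
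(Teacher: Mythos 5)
Your proposal is correct, but the comparison here is necessarily lopsided: the paper does not prove this statement at all --- it is stated as a conjecture and treated as open throughout, and the paper's actual contribution, an $O(n^{.4696})$ \emph{upper} bound on the GKS game cost, points in the opposite direction, since by the paper's own proposition $s(f)\leq C(\deg(f))$ only a polynomial \emph{lower} bound on $C(n)$ would imply the conjecture. What you outline, bypassing the game entirely, is Huang's proof of the sensitivity conjecture, and every step is known to go through. The reduction is the sufficient direction of the Gotsman--Linial equivalence: for $f$ of full degree $n$, the set where $f$ differs from parity (or its complement, whichever exceeds $2^{n-1}$ vertices; the full-degree hypothesis is exactly what forces the two sizes to be unequal) induces a subgraph of $Q_n$ in which the degree of a vertex equals the local sensitivity of $f$ there, and general $f$ reduces to this case by restricting to a maxonomial. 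The cancellation you flag as the main obstacle is in fact a two-line block computation: with $A_1=\left(\begin{smallmatrix}0&1\\1&0\end{smallmatrix}\right)$ and $A_n=\left(\begin{smallmatrix}A_{n-1}&I\\I&-A_{n-1}\end{smallmatrix}\right)$, the off-diagonal blocks of $A_n^2$ are $A_{n-1}-A_{n-1}=0$ and the diagonal blocks are $A_{n-1}^2+I=nI$ by induction. Zero trace then forces each of $\pm\sqrt{n}$ to have multiplicity $2^{n-1}$; Cauchy interlacing gives $\lambda_{\max}(B)\geq\lambda_{2^n-m+1}(A_n)=\sqrt{n}$ for any principal submatrix on $m>2^{n-1}$ vertices; and $\Delta(H)\geq\rho(\lvert B\rvert)\geq\lambda_{\max}(B)$ closes the chain, yielding $s(f)\geq\sqrt{\deg(f)}$, i.e.\ $d=2$, which is optimal since the AND-of-ORs function has $\deg(f)=\Theta(s(f)^2)$. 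The contrast with the paper is instructive: your spectral route settles the conjecture with the best possible exponent, whereas the game-theoretic route never could --- the paper's upper bound shows any lower bound on $C(n)$ has exponent at most $.4696$, so the best the game could ever certify is $d=1/.4696\approx 2.13$.
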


\noindent The conjecture above is one formulation of the sensitivity conjecture. The GKS game provides us an approach to settling the conjecture through the following proposition:

\begin{proposition}
\cite{gks15} Let $C(n)$ denote the cost of the GKS game played on $n$ variables. Then $s(f)\leq C(deg(f))$.
\end{proposition}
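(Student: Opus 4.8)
\noindent The plan is to read $C(\deg(f))$ as the value that the two cooperating players can guarantee against a worst-case Eve, so that establishing $s(f)\le C(\deg(f))$ amounts to producing, for an arbitrary Boolean $f$, a single strategy for Eve in the game on $\deg(f)$ variables that forces \emph{every} Alice--Bob protocol to output a set $S$ with $|S|\ge s(f)$. First I would fix an input $z$ on which $f$ attains its sensitivity and let $B=\{i : f(z)\neq f(z^{\oplus i})\}$ be its sensitive block, so $|B|=s(f)$; after a standard restriction to a minimal set of variables carrying a top-degree monomial, I may assume the $\deg(f)$ coordinates of the game host $B$ (this uses $s(f)\le \deg(f)$).

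Next I would have Eve use $z$ to script the game. The key steps, in order, are: (i) choose the permutation so that the coordinates outside $B$ are presented first, in an order that pins Alice's string to agree with $z$ off $B$ regardless of her choices; (ii) reserve the coordinates of $B$ for the end, playing so that the final, Eve-controlled bit could equally well have been placed at any position of $B$; and (iii) conclude that, because every coordinate of $B$ is sensitive at $z$, the strings Bob could be shown are mutually indistinguishable across the $s(f)$ choices of Eve's index, so any set $S$ guaranteed to contain Eve's index must contain all of $B$, whence $|S|\ge |B| = s(f)$. Taking the worst case over Eve and then the best case over protocols yields $s(f)\le C(\deg(f))$.

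\noindent The hard part will be step (iii) against arbitrary adaptive and \emph{non-monotone} protocols: Alice fixes each bit knowing only the order revealed so far, and Bob may exploit non-monotonicity when he inspects the final string, so genuine indistinguishability among all $s(f)$ candidate positions of $B$ is subtle to guarantee. This is exactly the regime in which general lower bounds on $C$ are difficult---the $\lfloor\sqrt{n}\rfloor$ bound of \cite{gks15} is known only for monotone protocols---so I expect the main obstacle to be converting the sensitivity of $z$ into a protocol-independent obstruction for Bob rather than a monotone-only one.
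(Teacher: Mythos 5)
Your plan cannot succeed because the statement, read the way you read it, is false --- the inequality in the proposition as printed is transposed. The result actually established in \cite{gks15}, and the one this paper's own introduction states correctly, is that the cost of the game played on $deg(f)$ variables is \emph{at most} $s(f)$, i.e.\ $C(deg(f))\leq s(f)$; this is also the only direction compatible with the paper's logic, since a polynomial \emph{lower} bound on $C$ is supposed to force $s(f)\geq deg(f)^c$. Read literally, $s(f)\leq C(deg(f))$ is refuted by this paper's own main theorem: take $f$ to be parity on $165$ variables, so $s(f)=deg(f)=165$, while the $(11,165)$ strategy gives $C(165)\leq 11$. Your proposal --- an Eve strategy forcing every protocol to output a set of size at least $s(f)$ --- would, applied to parity, prove $C(n)\geq n$, contradicting Szegedy's $O(n^{.4732})$ protocol outright. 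The correct proof runs in the opposite direction: the function $f$ is used to build a cheap \emph{Alice--Bob protocol}, not a strong adversary. One restricts $f$ to a maxonomial so that it has full degree on $n=deg(f)$ variables (restriction cannot increase sensitivity); when Alice receives an index $i$ she writes the bit $b$ for which the restriction $f|_{x_i=b}$ retains full degree on the unset variables (at least one of the two restrictions preserves the top monomial's coefficient); after $n-1$ bits the residual function is a nonconstant function of the single unset variable, so the completed string $x$ is sensitive at Eve's coordinate, and Bob returns the set of coordinates at which $f$ is sensitive on input $x$ --- a set containing Eve's index of size at most $s(f)$.

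Even within your own framework the individual steps fail concretely. In step (i), Eve controls only the \emph{order} in which indices are revealed, never any values before the final bit, so she has no mechanism to ``pin Alice's string to agree with $z$ off $B$'': Alice may write whatever bits she pleases. In step (iii), Bob knows Alice's deterministic strategy and sees the entire final string, so for many protocols he can simply reconstruct the placement history; the sensitivity of $f$ at $z$ places no constraint on what Bob can infer, and no indistinguishability follows. Your closing remark --- that the obstruction you need is precisely the open general lower-bound problem --- should have been read as a refutation of the approach rather than a difficulty inside it. Two smaller points: the auxiliary claim $s(f)\leq deg(f)$ is false in general (not-all-equal on three bits has degree $2$ and sensitivity $3$), and the paper itself offers no proof of this proposition, citing \cite{gks15}, where the inequality appears in the direction described above.
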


\noindent From the proposition above, it is clear that a polynomial lower bound on the cost of the GKS game implies that degree and sensitivity are polynomially related. 

\begin{definition}
Let the GKS game be played with a binary string of length $n$ and let $k$ be the size of the subset returned by Bob as a result of Alice's strategy in the worst case. We say Alice's strategy is a ($k$,$n$) strategy. 
\end{definition}

 \noindent Szegedy provided a (5,30) strategy for the communication game by making use of 6 distinct binary strings that could be distinguished even after a single bit flip occurs in any of the digits \cite{sz15}. Hamming codewords also satisfy this property, so this gives us a way to generalize this technique for $n=2^k$ where $k\in\mathbb{Z^+}$, the length of the Hamming code.

\begin{definition}
The Hamming code is a linear error-correcting code with a parity check matrix $H$ given by $H = \left(A | I_{n-k}\right)$ where $I_{n-k}$ is the ($n$-$k$)-identity matrix and $A$ is all the nonzero n-tuples that do not appear in $I_{n-k}$.
\end{definition}

\begin{proposition}\label{szprop}
If there exists a ($k$, $n$) strategy and a there exists a ($k'$, $n'$) strategy, then there exists a ($kk'$, $nn'$) strategy.
\end{proposition}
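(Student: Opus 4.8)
The plan is to build the composed strategy by a block simulation. Identify the $nn'$ indices with the grid $[n]\times[n']$ and group them into $n$ blocks $B_1,\dots,B_n$, where $B_i=\{i\}\times[n']$ has size $n'$. Alice runs the inner $(k',n')$ strategy inside each block and the outer $(k,n)$ strategy across blocks. As Eve reveals the indices of $[n]\times[n']$ one at a time, each revealed index lands in some block $B_i$ and is the next index exposed within that block; Alice responds by feeding it to the copy of the inner strategy dedicated to $B_i$, using the order that Eve's permutation induces on $B_i$. Since Eve fills only the single global-last index, exactly one block, say $B_{i^*}$, has its final index filled by Eve, and that index is also the last one revealed inside $B_{i^*}$. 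Hence $B_{i^*}$ is a faithful instance of the inner game, in which Alice supplies $n'-1$ bits and Eve the last, while every other block is filled entirely by Alice.

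First I would let the blocks drive an outer game on $n$ super-indices. I assign to each block a label $b_i\in\{0,1\}$ computed from its final contents by a fixed function (for instance the parity of the block), chosen so that Alice's last free bit in a block can steer $b_i$ to either value. I order the super-indices by the time at which the blocks receive their last revealed index; because the last index of $B_{i^*}$ is the global-last, $i^*$ is the final super-index in this order, exactly the role of Eve's index in the outer game. As each block $B_i$ with $i\ne i^*$ completes, Alice, who controls its final bit, sets $b_i$ to the value the outer strategy prescribes, so that $b_1,\dots,b_n$ form a legitimate outer transcript in which Eve's super-index is $i^*$ and her super-value is $b_{i^*}$. Bob, seeing the whole string, recomputes every $b_i$, runs the outer decoder to obtain a set $T\subseteq[n]$ with $i^*\in T$ and $|T|\le k$, and then runs the inner decoder inside each $B_i$ with $i\in T$ to obtain $S_i\subseteq[n']$ with $|S_i|\le k'$; since $B_{i^*}$ is a genuine inner instance, its decoder captures the true within-block index $j^*$. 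Returning $S=\bigcup_{i\in T}\{i\}\times S_i$ yields $(i^*,j^*)\in S$ and $|S|\le\sum_{i\in T}|S_i|\le kk'$, the claimed bound.

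The step I expect to be the main obstacle is reconciling two demands on a block that Alice fills completely: she must keep the block a valid inner configuration, so that the inner decoder still returns at most $k'$ indices even when that block is a false candidate in $T$, and she must be able to force its label $b_i$ to the value the outer strategy requires. These can conflict if the inner strategy admits only one valid completing bit, since then the label is pinned and Alice loses the freedom to realize her outer codeword. Resolving this is the crux: I would show the inner strategy can be normalized so that, after its first $n'-1$ moves, both completions are admissible and toggle the label, giving Alice genuine control of $b_i$ while preserving inner decodability. The remaining bookkeeping — that the induced order on each block is a legitimate inner order, and that the block-completion order is a legitimate outer order despite Alice's ignorance of which block will turn out to be $i^*$ — then follows because each sub-strategy is, by definition, robust against an adversarial ordering.
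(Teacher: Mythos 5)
Your construction is essentially the standard composition argument: the paper itself does not reprove this proposition (it defers to Szegedy's paper), and your block simulation --- the inner strategy run inside each block, an outer strategy driven by the block-completion order, a parity label per block, and a decoder that first runs the outer decoder to get $T$ and then the inner decoder inside each block of $T$ --- is exactly that proof, with the correct cost accounting $|S|\le\sum_{i\in T}|S_i|\le kk'$ and the correct observation that the block containing Eve's cell is the last block to complete, hence plays the role of Eve's super-index.

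The one thing to repair is the step you defer as ``the crux,'' which is not an obstacle at all, and no normalization of the inner strategy is needed. By the definition of the game, a $(k',n')$ strategy for Alice specifies her bits only at the first $n'-1$ revealed cells; the last revealed cell is always filled by Eve. So the inner strategy prescribes nothing whatsoever for the final cell of a block, and the situation you worry about --- ``the inner strategy admits only one valid completing bit'' --- cannot arise. When Alice herself fills the last cell of a block $B_i$ with $i\ne i^*$, either value she writes yields a string that is a legitimate terminal position of the inner game, namely the one in which ``Eve'' wrote that bit; hence the inner decoder is already guaranteed to return a set of size at most $k'$ on that block (indeed one containing the block's last-revealed cell), no matter which value Alice chose. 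Since flipping any single bit flips parity, the two completions carry different labels, so Alice can always steer $b_i$ to the value the outer strategy demands while keeping every block inner-decodable. With this observation your deferred lemma is vacuous, and the rest of your argument goes through verbatim.
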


\noindent The proof of this proposition was provided by Szegedy \cite{sz15}. Because of the proposition above, we can make crude approximations to the cost of the game for large values of $n$ by repeatedly composing smaller games, for which strategies are known, together. This allows us to obtain an upper bound on the cost of the game after determining the existence of a $(k,n)$ strategy.

\section{Results\label{RESULTS}}
\begin{theorem}
There exists an (11, 165) strategy for the GKS game.
\end{theorem}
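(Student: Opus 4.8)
The plan is to exhibit an explicit set of codewords in $\{0,1\}^{165}$ and then certify, using the max-bipartite-matching test that is the technical engine of this paper, that it gives Alice a strategy whose worst-case returned set $S$ has size $11$. Following Szegedy's $(5,30)$ construction---which used $6$ strings that remain pairwise distinguishable after any single bit flip---I would first insist that the code have minimum Hamming distance at least $3$, so that the lone bit Eve controls at the final index can never carry one codeword onto another. This is the property that lets Bob reconstruct the intended codeword from the string he receives; the remaining work is to bound, over all adversarial orders of the indices and all choices of Eve's last bit, the number of indices that Bob cannot rule out as Eve's.

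For the certification step I would translate ``is a viable $(11,165)$ strategy'' into a statement about matchings. Build a bipartite graph whose left vertices are the codewords and whose right vertices are the $165$ coordinates, joining a codeword to a coordinate when that coordinate can serve as a witness distinguishing it from its competitors after a flip. Alice's online procedure---placing one bit at each revealed index with no foreknowledge of future indices---succeeds against every permutation exactly when she can always keep her partial string consistent with a codeword while reserving distinct witnessing coordinates for the codewords still in play; by Hall's theorem this is precisely the existence of a system of distinct representatives, i.e. a matching saturating the live codewords. Dually, by K\"onig's theorem the number of indices Bob is forced to place in $S$ is controlled by a minimum vertex cover of the associated graph, and the target is to show that this never exceeds $11$.

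Concretely, then, I would (i) write down the $165$-coordinate code, (ii) verify that it has distance at least $3$, and (iii) run the Hall/K\"onig check over the reachable configurations to confirm that Alice is never stuck and that Bob's uncertainty set has size at most $11$ in the worst case. The resulting $(11,165)$ strategy is a base case intended to be fed into the composition of Proposition~\ref{szprop}, where iterating it against itself drives the exponent down to $\log_{165} 11 \approx 0.4696$.

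The main obstacle I anticipate is carrying out (i)--(iii) jointly: the distance-$\geq 3$ requirement and the matching/vertex-cover bound pull against each other, since lengthening the code---which is exactly what lowers the exponent below Szegedy's---makes the distinguish-after-one-flip constraint harder to satisfy while keeping the number of codewords, and hence Bob's worst-case ambiguity, pinned at the small value needed for $k=11$. I expect this to require an explicit, possibly computer-found, code at length $165$ together with a verification of Hall's condition for every reachable state of the online game, and the delicate point will be arguing that the matching witnessing the strategy persists under the adversary's freedom to choose both the order of the indices and the value of the final bit.
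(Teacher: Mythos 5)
Your proposal restates the goal in matching-theoretic vocabulary but omits every idea that actually makes the theorem work, and the one structural commitment you do make is wrong. The paper does not construct a code of length $165$ at all: it partitions the $165$ coordinates into $11$ blocks of size $15$ and works inside each block with the length-$15$ Hamming code (dimension $11$, distance $3$). The matching is not between codewords and coordinates; it is between the $\binom{15}{4}$ four-element subsets of $\{1,\dots,15\}$ and Hamming codewords having ones at those four positions. This is precisely what resolves the online difficulty you gloss over: Alice, seeing the first four indices of a block with no knowledge of the future, writes ones, and the matching then hands her a single codeword consistent with that forced prefix which she follows for the rest of the block. Moreover, your requirement that Alice ``always keep her partial string consistent with a codeword'' is the opposite of the construction: at the last index of every block she completes, Alice deliberately breaks the codeword. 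That deviation is the whole decoding mechanism. Bob's two cases are: every block fails to be a codeword, in which case unique decoding (distance $3$) exposes the last-filled index of each block and Bob returns those $11$ indices, one per block; or exactly one block is a codeword, which must be Eve's block, and Bob inverts the (injective) matching to recover the four positions Alice filled first and returns the other $15-4=11$ indices of that block. Note the arithmetic your plan never engineers: the bound is $11$ in both cases only because the number of blocks was chosen to equal $15-4$.

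Under your scheme the strategy fails outright. If Alice plays $164$ bits of a length-$165$ distance-$3$ codeword and Eve sets her bit to agree with that codeword, Bob receives a clean codeword carrying no information about which index was written last; he cannot do better than return essentially every coordinate, since nothing in your construction encodes the order of play into the string. The Hall/K\"onig reformulation does not repair this: the claim that the game cost equals a minimum vertex cover of your witness graph is asserted, not proved, and there is no such theorem to invoke. Finally, you defer the entire content of the result --- finding the code and verifying the online property --- to a computer search over ``reachable configurations'' of a length-$165$ object, whereas the paper's search is the far smaller, and precisely specified, task of matching $1365$ subsets into $2048$ length-$15$ codewords, which is exactly what its appendix exhibits.
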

\begin{proof}
First, we note that the Hamming code of length 15 has dimension 11 and distance 3. This implies that there exist a set of 2048 codewords of length 15 such that single bit flips that occur on these codewords can be distinguished. We used a computer search to show that there exists a bipartite matching of size $\binom{15}{4}$ between 4 digit combinations of indices all filled with ones and the set of codewords (see Appendix A). Given this matching, Alice proceeds with the following protocol: \\
1) Break up the input into 11 blocks of size 15. \\
2) For the first 4 indices received that correspond to a single block, place ones. \\
3) For the 5$^{th}$ index received and so on, fill in bits corresponding to the binary codeword to which the 4 index combination is matched. \\
4) For the last bit in a single block, Alice sets a bit so that the block is not a codeword. \\
The maximum size of the subset Bob must return is 11. If Eve places her bit so that the last block is not a codeword, then Bob knows the last bit flipped in every block and returns the eleven indices associated with those bits. If Eve set her bit so that the last block is a codeword, then that block will be the only block that is a codeword. Then, Bob can use the matching to figure out which four indices were set first so that he can return the other eleven.
\end{proof}

\noindent To obtain an upper bound on the cost of this game, we compose games of size 165 together while using our (11,165) strategy for each game to obtain the following:

\begin{corollary}
There exists an ($11^k$, $165^k$) strategy for the GKS game.
\end{corollary}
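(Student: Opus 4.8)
The plan is to obtain the $(11^k, 165^k)$ strategy by repeatedly applying Proposition~\ref{szprop}, using the $(11,165)$ strategy from the preceding theorem as the single building block. Proposition~\ref{szprop} states that a $(k_1,n_1)$ strategy and a $(k_2,n_2)$ strategy compose into a $(k_1 k_2, n_1 n_2)$ strategy, so the natural route is induction on $k$, at each stage composing the already-constructed strategy with a fresh copy of the base $(11,165)$ strategy.

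\medskip

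\noindent More concretely, I would argue by induction on $k \geq 1$. The base case $k=1$ is precisely the theorem just proved: an $(11,165) = (11^1, 165^1)$ strategy exists. For the inductive step, I would assume that an $(11^{k}, 165^{k})$ strategy exists for some $k \geq 1$. Applying Proposition~\ref{szprop} with the first strategy taken to be this $(11^{k}, 165^{k})$ strategy and the second strategy taken to be the base $(11,165)$ strategy, I obtain a strategy of size
\begin{equation*}
(11^{k}\cdot 11,\ 165^{k}\cdot 165) = (11^{k+1}, 165^{k+1}),
\end{equation*}
which completes the induction. Thus an $(11^{k},165^{k})$ strategy exists for every positive integer $k$.

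\medskip

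\noindent Since Proposition~\ref{szprop} is stated earlier in the excerpt and may be invoked freely, essentially all of the work is already done; the argument is a routine induction with no genuine obstacle. The only point requiring a small amount of care is to verify that composition is associative enough to iterate cleanly -- that is, that feeding the output of one composition back into Proposition~\ref{szprop} as a legitimate $(k,n)$ strategy is permissible. This holds because the conclusion of Proposition~\ref{szprop} produces an object of exactly the same type as its hypotheses (a strategy with specified parameters), so the proposition applies uniformly at each step of the induction. I would also remark that, as noted in the surrounding text, this family of strategies is what yields the quantitative upper bound: taking $n = 165^{k}$ gives cost $11^{k} = n^{\log_{165} 11} = O(n^{.4696})$, which is the improvement claimed in the abstract.
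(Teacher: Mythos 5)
Your proof is correct and matches the paper's approach exactly: the paper likewise obtains the corollary by repeatedly composing the $(11,165)$ strategy with itself via Proposition~\ref{szprop}, and your induction on $k$ is simply the careful write-up of that iteration.
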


\noindent The corollary above is an application of Proposition~\ref{szprop}. With this result, we can place an upper bound on the cost ($C(n) = O(n^{.4696})$) of the GKS game.

\begin{corollary}
$C(n) < 11 \cdot n^{log_{165}11}$
\end{corollary}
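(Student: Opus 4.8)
The plan is to pin down the cost exactly at the powers of $165$ using the preceding corollary, extend to all $n$ by a monotonicity (padding) argument, and then absorb the resulting rounding into the leading constant $11$.

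First I would record the base estimate. The preceding corollary furnishes an $(11^k, 165^k)$ strategy for every $k \in \mathbb{Z}^+$, which by the definition of a $(k,n)$ strategy means that on a game of $n = 165^k$ variables Alice can guarantee that Bob's returned set has size at most $11^k$; that is, $C(165^k) \le 11^k$. Observe that at $n = 165^k$ we have $\log_{165} n = k$ and $n^{\log_{165} 11} = (165^k)^{\log_{165} 11} = (165^{\log_{165} 11})^k = 11^k$, so the claimed inequality already holds (indeed with room to spare from the factor $11$) precisely at these values.

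Next I would establish that $C$ is nondecreasing, so that the estimate at powers of $165$ controls $C(n)$ for every intermediate $n$. It suffices to show $C(n) \le C(n+1)$ and iterate. Given a strategy for the $(n+1)$-variable game, Alice simulates it inside the $n$-variable game by treating coordinate $n+1$ as a virtual coordinate that is revealed first: she fills it with the bit $b$ that the $(n+1)$-strategy dictates for a first reveal of coordinate $n+1$, and thereafter responds to the genuine reveals $\pi(1), \pi(2), \dots$ exactly as the $(n+1)$-strategy would. The reveal order $n+1, \pi(1), \dots, \pi(n)$ is then a legitimate play of the $(n+1)$-game in which Eve still fills the last genuine coordinate, so Bob, after appending the known bit $b$ at position $n+1$ to the string he receives, may run the $(n+1)$-game decoder and return $S \cap [n]$. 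This set contains Eve's index and has size at most $C(n+1)$, giving $C(n) \le C(n+1)$; iterating yields $C(n) \le C(N)$ whenever $N \ge n$.

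Finally I would collapse the ceiling into the constant. Choosing $k = \lceil \log_{165} n \rceil$ gives $165^k \ge n$, hence $C(n) \le C(165^k) \le 11^k$. Since $k < \log_{165} n + 1$, we conclude $11^k < 11^{\log_{165} n + 1} = 11 \cdot 11^{\log_{165} n} = 11 \cdot n^{\log_{165} 11}$, using the identity $11^{\log_{165} n} = n^{\log_{165} 11}$; chaining these bounds gives $C(n) < 11 \cdot n^{\log_{165} 11}$. The only genuine obstacle is the monotonicity step: one must verify that the virtual coordinate does not covertly enlarge Eve's power---that the simulated $(n+1)$-game is a faithful play and that Bob really can reconstruct it from the information he is given---whereas the surrounding arithmetic is routine.
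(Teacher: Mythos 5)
Your proposal is correct, and it is in fact more complete than what the paper provides: the paper states this corollary bare, with only the remark that it follows from the $(11^k,165^k)$ strategies, so the intended argument is exactly your arithmetic (take $k=\lceil \log_{165} n\rceil$, use $C(165^k)\le 11^k$, and absorb the ceiling via $11^k < 11^{\log_{165}n+1}=11\cdot n^{\log_{165}11}$). What the paper silently skips over---and what you correctly identify as the only nontrivial step---is monotonicity of $C$: the composed strategies only bound $C$ at inputs of size exactly $165^k$, and without an argument that $C(n)\le C(N)$ for $N\ge n$ the corollary does not follow for general $n$. Your padding argument handles this properly: the virtual coordinate $n+1$ is revealed first, its bit $b$ is determined by the (deterministic) $(n+1)$-strategy and hence known to Bob, the simulated reveal order $n+1,\pi(1),\dots,\pi(n)$ is a legitimate play of the $(n+1)$-game in which Eve still places the last bit at an index in $[n]$, and Bob's set $S\cap[n]$ therefore contains Eve's index with $|S\cap[n]|\le C(n+1)$. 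The strictness of the final inequality also checks out, since $\lceil \log_{165} n\rceil < \log_{165} n + 1$ holds even when $\log_{165} n$ is an integer. In short: same route the paper intends, but you have supplied the lemma the paper never proves.
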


\section{Acknowledgments\label{ACKNOWLEDGEMENTS}}
I would like to thank Lance Fortnow for helpful discussions.
\pagebreak
\bibliographystyle{plain}
\bibliography{sources}

\begin{thebibliography}{10}

\bibitem{apv15}
Andris Ambainis, Krisjanis Prusis, and Jevgenijs Vihrovs.
\newblock Sensitivity versus certificate complexity of boolean functions.
\newblock {\em CoRR}, abs/1503.07691, 2015.

\bibitem{bw02}
Harry Buhrman and Ronald Wolf.
\newblock Complexity measures and decision tree complexity: A survey, 10 2002.

\bibitem{d17}
Andrew Drucker.
\newblock A note on a communication game.
\newblock {\em CoRR}, abs/1706.07890, 2017.

\bibitem{gks15}
Justin Gilmer, Michal Kouck{\'{y}}, and Michael~E. Saks.
\newblock A communication game related to the sensitivity conjecture.
\newblock {\em CoRR}, abs/1511.07729, 2015.

\bibitem{gstw16}
Parikshit Gopalan, Rocco~A. Servedio, Avishay Tal, and Avi Wigderson.
\newblock Degree and sensitivity: tails of two distributions.
\newblock {\em CoRR}, abs/1604.07432, 2016.

\bibitem{gs004}
Pooya Hatami, Raghav Kulkarni, and Denis Pankratov.
\newblock {\em Variations on the Sensitivity Conjecture}.
\newblock Number~4 in Graduate Surveys. Theory of Computing Library, 2011.

\bibitem{dg17}
DeVon Ingram.
\newblock {GKS} max bipartite matching.
\newblock \url{goo.gl/MaznZW}, 2017.

\bibitem{kk04}
Claire Kenyon and Samuel Kutin.
\newblock Sensitivity, block sensitivity, and l-block sensitivity of boolean
  functions.
\newblock {\em Inf. Comput.}, 189(1):43--53, February 2004.

\bibitem{ns94}
Noam Nisan and Mario Szegedy.
\newblock On the degree of boolean functions as real polynomials.
\newblock {\em computational complexity}, 4(4):301--313, Dec 1994.

\bibitem{sz15}
Mario Szegedy.
\newblock An {\textdollar}{O}(n{\^{}}\{0.4732\}){\textdollar} upper bound on
  the complexity of the {GKS} communication game.
\newblock {\em CoRR}, abs/1506.06456, 2015.

\end{thebibliography}
\pagebreak
\appendix
\section{Max Bipartite Matching}
Below is the explicit matching referred to earlier in the paper. Each subset of four indices is matched to a hamming codeword. Each string in the table satisfies two properties:
\begin{enumerate}
    \item For every subset of size 4 of $\{1,\cdots, 15\}$ and for every $i$ in a subset, the binary string matched to it has a 1 at bit $x_i$.
    \item Let $S,T$ be two distinct subsets of size 4 of $\{1,\cdots, 15\}$. Then, their matched strings differ in at least three bits. 
\end{enumerate}
To see the code that produced this matching, see \href{https://github.com/GKSMaxBipartiteMatching/GKS-Game-Strategy}{the GitHub repository} \cite{dg17}.
\begin{longtable}{|p{2.2cm}||p{3cm}||p{2.2cm}||p{3cm}||p{2.2cm}||p{3cm}|}
\hline
\multicolumn{6}{|c|}{Bipartite Matching} \\
\hline
Subset & Matched String & Subset & Matched String & Subset & Matched String \\
\hline
\{1,2,3,4\}&111111001000010&\{2,4,7,14\}&010100110100110&\{4,5,12,14\}&001110000001010\\
\{1,2,3,5\}&111011101000100&\{2,4,7,15\}&010101110000001&\{4,5,12,15\}&010110000001001\\
\{1,2,3,6\}&111101000111001&\{2,4,8,9\}&011110011110000&\{4,5,13,14\}&010110000000110\\
\{1,2,3,7\}&111011110110100&\{2,4,8,10\}&010101110110010&\{4,5,13,15\}&110110000000101\\
\{1,2,3,8\}&111110111011000&\{2,4,8,11\}&110110010011100&\{4,5,14,15\}&000110000000011\\
\{1,2,3,9\}&111001101011001&\{2,4,8,12\}&011110110001110&\{4,6,7,8\}&000101111100010\\
\{1,2,3,10\}&111111110101010&\{2,4,8,13\}&010111011011110&\{4,6,7,9\}&000101101001000\\
\{1,2,3,11\}&111111000011000&\{2,4,8,14\}&010100010000010&\{4,6,7,10\}&000101110111000\\
\{1,2,3,12\}&111101111011110&\{2,4,8,15\}&110100010000001&\{4,6,7,11\}&000101111011110\\
\{1,2,3,13\}&111111100111100&\{2,4,9,10\}&011111001110010&\{4,6,7,12\}&000101100101110\\
\{1,2,3,14\}&111110101001110&\{2,4,9,11\}&010111101010000&\{4,6,7,13\}&000101110000100\\
\{1,2,3,15\}&111111010000001&\{2,4,9,12\}&010110101111000&\{4,6,7,14\}&000101100010010\\
\{1,2,4,5\}&110111011101110&\{2,4,9,13\}&010101101111110&\{4,6,7,15\}&100101111100001\\
\{1,2,4,6\}&111101100010010&\{2,4,9,14\}&110100001000010&\{4,6,8,9\}&110101011000000\\
\{1,2,4,7\}&110100100000000&\{2,4,9,15\}&010100001000001&\{4,6,8,10\}&000101010100000\\
\{1,2,4,8\}&111111110010110&\{2,4,10,11\}&010100010111110&\{4,6,8,11\}&100101010010000\\
\{1,2,4,9\}&111111111110000&\{2,4,10,12\}&010100000101000&\{4,6,8,12\}&100101110001000\\
\{1,2,4,10\}&110101101110010&\{2,4,10,13\}&110100000100100&\{4,6,8,13\}&010111010000100\\
\{1,2,4,11\}&111111110011001&\{2,4,10,14\}&010111011100010&\{4,6,8,14\}&100111010000010\\
\{1,2,4,12\}&110111111001010&\{2,4,10,15\}&011100000100001&\{4,6,8,15\}&000111010000001\\
\{1,2,4,13\}&111111001111110&\{2,4,11,12\}&110100000011000&\{4,6,9,10\}&100101001100000\\
\{1,2,4,14\}&111110110000010&\{2,4,11,13\}&010100000010100&\{4,6,9,11\}&000101001010000\\
\{1,2,4,15\}&110110000111001&\{2,4,11,14\}&111100011010010&\{4,6,9,12\}&101111001001000\\
\{1,2,5,6\}&110011000000000&\{2,4,11,15\}&010100101011001&\{4,6,9,13\}&001111001000100\\
\{1,2,5,7\}&111111100000000&\{2,4,12,13\}&010110010101100&\{4,6,9,14\}&000111001000010\\
\{1,2,5,8\}&111111111001100&\{2,4,12,14\}&011101000111010&\{4,6,9,15\}&100111001000001\\
\{1,2,5,9\}&111111101011010&\{2,4,12,15\}&010111111001001&\{4,6,10,11\}&001101000110000\\
\{1,2,5,10\}&110010111111100&\{2,4,13,14\}&011100111000110&\{4,6,10,12\}&000101001101100\\
\{1,2,5,11\}&111010010111000&\{2,4,13,15\}&010111100000101&\{4,6,10,13\}&000101101110100\\
\{1,2,5,12\}&110111101011100&\{2,4,14,15\}&010100100000011&\{4,6,10,14\}&001111000100010\\
\{1,2,5,13\}&110111111110110&\{2,5,6,7\}&011011111100010&\{4,6,10,15\}&000101100100001\\
\{1,2,5,14\}&111011011110110&\{2,5,6,8\}&011011010011100&\{4,6,11,12\}&100101000111010\\
\{1,2,5,15\}&110111011100001&\{2,5,6,9\}&010011111100100&\{4,6,11,13\}&011111000010100\\
\{1,2,6,7\}&110101111100100&\{2,5,6,10\}&111011001100000&\{4,6,11,14\}&010111000010010\\
\{1,2,6,8\}&111011111101110&\{2,5,6,11\}&010011110111110&\{4,6,11,15\}&001101001011001\\
\{1,2,6,9\}&110101001011001&\{2,5,6,12\}&010011111011000&\{4,6,12,13\}&110101000001100\\
\{1,2,6,10\}&110101000110000&\{2,5,6,13\}&110111001000100&\{4,6,12,14\}&000101000001010\\
\{1,2,6,11\}&111111011010100&\{2,5,6,14\}&111011000000110&\{4,6,12,15\}&100101000001001\\
\{1,2,6,12\}&110001111111010&\{2,5,6,15\}&110011110000001&\{4,6,13,14\}&100101000000110\\
\{1,2,6,13\}&111001111111100&\{2,5,7,8\}&010010111110000&\{4,6,13,15\}&000101000000101\\
\{1,2,6,14\}&111111010001110&\{2,5,7,9\}&011011101001000&\{4,6,14,15\}&100101010100011\\
\{1,2,6,15\}&111111001110001&\{2,5,7,10\}&011010110101100&\{4,7,8,9\}&010100111000000\\
\{1,2,7,8\}&111101110000100&\{2,5,7,11\}&010011101110010&\{4,7,8,10\}&100100110100000\\
\{1,2,7,9\}&111011111010010&\{2,5,7,12\}&011010111001010&\{4,7,8,11\}&000100110010000\\
\{1,2,7,10\}&110111101100000&\{2,5,7,13\}&010010111001100&\{4,7,8,12\}&010110110001000\\
\{1,2,7,11\}&111000101111110&\{2,5,7,14\}&010010101011010&\{4,7,8,13\}&000100110101100\\
\{1,2,7,12\}&111010111111010&\{2,5,7,15\}&010011110110001&\{4,7,8,14\}&000110110000010\\
\{1,2,7,13\}&111110111100100&\{2,5,8,9\}&010011011111100&\{4,7,8,15\}&011110110000001\\
\{1,2,7,14\}&110110111011110&\{2,5,8,10\}&010111010111000&\{4,7,9,10\}&000100101100000\\
\{1,2,7,15\}&111111101101001&\{2,5,8,11\}&010011010011010&\{4,7,9,11\}&100100101010000\\
\{1,2,8,9\}&111101011111010&\{2,5,8,12\}&010010110101010&\{4,7,9,12\}&001110101001000\\
\{1,2,8,10\}&110101011111100&\{2,5,8,13\}&011011110000100&\{4,7,9,13\}&010110101000100\\
\{1,2,8,11\}&110000010010000&\{2,5,8,14\}&110010010000010&\{4,7,9,14\}&011110101000010\\
\{1,2,8,12\}&111100011101110&\{2,5,8,15\}&010010010000001&\{4,7,9,15\}&000110101000001\\
\{1,2,8,13\}&111110011111100&\{2,5,9,10\}&111010011100010&\{4,7,10,11\}&010100100110000\\
\{1,2,8,14\}&111101111100010&\{2,5,9,11\}&110110001010000&\{4,7,10,12\}&001100110101010\\
\{1,2,8,15\}&111101011001001&\{2,5,9,12\}&011011001101100&\{4,7,10,13\}&100110100100100\\
\{1,2,9,10\}&110000001100000&\{2,5,9,13\}&010111001110100&\{4,7,10,14\}&101110100100010\\
\{1,2,9,11\}&111110001011001&\{2,5,9,14\}&010010001000010&\{4,7,10,15\}&110110100100001\\
\{1,2,9,12\}&111011101111000&\{2,5,9,15\}&110010001000001&\{4,7,11,12\}&010101100011000\\
\{1,2,9,13\}&111101101110100&\{2,5,10,11\}&010010001111110&\{4,7,11,13\}&000110100010100\\
\{1,2,9,14\}&111001101101010&\{2,5,10,12\}&110010000101000&\{4,7,11,14\}&001110100010010\\
\{1,2,9,15\}&110010101011001&\{2,5,10,13\}&010010000100100&\{4,7,11,15\}&110100110011001\\
\{1,2,10,11\}&111110110111110&\{2,5,10,14\}&010011010100110&\{4,7,12,13\}&101100100001100\\
\{1,2,10,12\}&111111011101000&\{2,5,10,15\}&011111010110001&\{4,7,12,14\}&100100100001010\\
\{1,2,10,13\}&111101000110110&\{2,5,11,12\}&010010000011000&\{4,7,12,15\}&000100100001001\\
\{1,2,10,14\}&111110101110010&\{2,5,11,13\}&110010000010100&\{4,7,13,14\}&000100100000110\\
\{1,2,10,15\}&111100011100001&\{2,5,11,14\}&010110100011110&\{4,7,13,15\}&100100100000101\\
\{1,2,11,12\}&110111100111010&\{2,5,11,15\}&011110101110001&\{4,7,14,15\}&000111111000011\\
\{1,2,11,13\}&111011100011110&\{2,5,12,13\}&010110001011100&\{4,8,9,10\}&100100011100010\\
\{1,2,11,14\}&111100111110110&\{2,5,12,14\}&110110000001010&\{4,8,9,11\}&000111011010100\\
\{1,2,11,15\}&110011101110001&\{2,5,12,15\}&111011000001001&\{4,8,9,12\}&100101011001010\\
\{1,2,12,13\}&111101100101110&\{2,5,13,14\}&110111100000110&\{4,8,9,13\}&000100111110110\\
\{1,2,12,14\}&111011011001010&\{2,5,13,15\}&011010111000101&\{4,8,9,14\}&000100011010010\\
\{1,2,12,15\}&110000000001001&\{2,5,14,15\}&111111111000011&\{4,8,9,15\}&100100011010001\\
\{1,2,13,14\}&110000000000110&\{2,6,7,8\}&010001111001010&\{4,8,10,11\}&000101011111010\\
\{1,2,13,15\}&110100110100101&\{2,6,7,9\}&110001101010000&\{4,8,10,12\}&000100011101110\\
\{1,2,14,15\}&111001100000011&\{2,6,7,10\}&110001110100000&\{4,8,10,13\}&000100010110100\\
\{1,3,4,5\}&101111111111010&\{2,6,7,11\}&010001101011100&\{4,8,10,14\}&100110010101010\\
\{1,3,4,6\}&101101000000000&\{2,6,7,12\}&010001110101100&\{4,8,10,15\}&010100010110001\\
\{1,3,4,7\}&101111110100000&\{2,6,7,13\}&011101101000100&\{4,8,11,12\}&001110010011100\\
\{1,3,4,8\}&111110011000000&\{2,6,7,14\}&010001111110110&\{4,8,11,13\}&000101010011100\\
\{1,3,4,9\}&101110101000100&\{2,6,7,15\}&011101100010001&\{4,8,11,14\}&110100010110010\\
\{1,3,4,10\}&101111101101100&\{2,6,8,9\}&111001011011000&\{4,8,11,15\}&100110010011001\\
\{1,3,4,11\}&101101111010100&\{2,6,8,10\}&010001010110100&\{4,8,12,13\}&100110011001100\\
\{1,3,4,12\}&101111110011100&\{2,6,8,11\}&011101010010000&\{4,8,12,14\}&000100111001010\\
\{1,3,4,13\}&101111111000110&\{2,6,8,12\}&010001010001000&\{4,8,12,15\}&001100110011001\\
\{1,3,4,14\}&101111100110110&\{2,6,8,13\}&110001010000100&\{4,8,13,14\}&110110011000110\\
\{1,3,4,15\}&101110111100001&\{2,6,8,14\}&110101110000010&\{4,8,13,15\}&010110011000101\\
\{1,3,5,6\}&101011011000000&\{2,6,8,15\}&111001010110001&\{4,8,14,15\}&101101011000011\\
\{1,3,5,7\}&101010100000000&\{2,6,9,10\}&011101001100000&\{4,9,10,11\}&000110101110010\\
\{1,3,5,8\}&101011011111100&\{2,6,9,11\}&110001001110100&\{4,9,10,12\}&000110001101010\\
\{1,3,5,9\}&101110101111000&\{2,6,9,12\}&110001001001000&\{4,9,10,13\}&100100001110100\\
\{1,3,5,10\}&101011111100100&\{2,6,9,13\}&010001001000100&\{4,9,10,14\}&100110001100110\\
\{1,3,5,11\}&101111010111000&\{2,6,9,14\}&011001001111110&\{4,9,10,15\}&000100011100001\\
\{1,3,5,12\}&111010100001010&\{2,6,9,15\}&010101101110001&\{4,9,11,12\}&000111101011010\\
\{1,3,5,13\}&101011110111110&\{2,6,10,11\}&110001010111000&\{4,9,11,13\}&001100011010100\\
\{1,3,5,14\}&101111000101110&\{2,6,10,12\}&010101010101010&\{4,9,11,14\}&001101001010110\\
\{1,3,5,15\}&111110110110001&\{2,6,10,13\}&010101100100100&\{4,9,11,15\}&000101111010001\\
\{1,3,6,7\}&111011110001000&\{2,6,10,14\}&010001000100010&\{4,9,12,13\}&100101001011100\\
\{1,3,6,8\}&101011111011000&\{2,6,10,15\}&110001000100001&\{4,9,12,14\}&100110001011010\\
\{1,3,6,9\}&101101011110000&\{2,6,11,12\}&010101000111100&\{4,9,12,15\}&000101011001001\\
\{1,3,6,10\}&101101101111110&\{2,6,11,13\}&111001000010100&\{4,9,13,14\}&001110011000110\\
\{1,3,6,11\}&101111011011110&\{2,6,11,14\}&110001000010010&\{4,9,13,15\}&001111111000101\\
\{1,3,6,12\}&111101101001000&\{2,6,11,15\}&010001000010001&\{4,9,14,15\}&100110011000011\\
\{1,3,6,13\}&101111001110100&\{2,6,12,13\}&111001100001100&\{4,10,11,12\}&100110000111100\\
\{1,3,6,14\}&111111010110010&\{2,6,12,14\}&011011000001010&\{4,10,11,13\}&001110000110110\\
\{1,3,6,15\}&101011010101001&\{2,6,12,15\}&010101010011001&\{4,10,11,14\}&010110000111010\\
\{1,3,7,8\}&111001111000000&\{2,6,13,14\}&010111000101110&\{4,10,11,15\}&000110110110001\\
\{1,3,7,9\}&101110111101110&\{2,6,13,15\}&110111111000101&\{4,10,12,13\}&100101010101100\\
\{1,3,7,10\}&101100111111100&\{2,6,14,15\}&010001110100011&\{4,10,12,14\}&011100000101110\\
\{1,3,7,11\}&101111101010000&\{2,7,8,9\}&110000111010010&\{4,10,12,15\}&010101001101001\\
\{1,3,7,12\}&101001101011100&\{2,7,8,10\}&011000111100100&\{4,10,13,14\}&000101000110110\\
\{1,3,7,13\}&101000111011110&\{2,7,8,11\}&110000110110100&\{4,10,13,15\}&010101010100101\\
\{1,3,7,14\}&101110111010010&\{2,7,8,12\}&110000110001000&\{4,10,14,15\}&101110010100011\\
\{1,3,7,15\}&111101111010001&\{2,7,8,13\}&010000110000100&\{4,11,12,13\}&100100110011100\\
\{1,3,8,9\}&101111011100010&\{2,7,8,14\}&010000111011110&\{4,11,12,14\}&000100000011110\\
\{1,3,8,10\}&101000010100000&\{2,7,8,15\}&011001110011001&\{4,11,12,15\}&000110001011001\\
\{1,3,8,11\}&101100010111110&\{2,7,9,10\}&011100101101100&\{4,11,13,14\}&000110001010110\\
\{1,3,8,12\}&111100110101100&\{2,7,9,11\}&111010101010000&\{4,11,13,15\}&001101010010101\\
\{1,3,8,13\}&101110011110110&\{2,7,9,12\}&010000101001000&\{4,11,14,15\}&110110010010011\\
\{1,3,8,14\}&111110010100110&\{2,7,9,13\}&110000101000100&\{4,12,13,14\}&000111010001110\\
\{1,3,8,15\}&101100110101001&\{2,7,9,14\}&010100101101010&\{4,12,13,15\}&101100010001101\\
\{1,3,9,10\}&101010111110000&\{2,7,9,15\}&010110111100001&\{4,12,14,15\}&100100010001011\\
\{1,3,9,11\}&101000001010000&\{2,7,10,11\}&010010100111100&\{4,13,14,15\}&111100010000111\\
\{1,3,9,12\}&111100101011100&\{2,7,10,12\}&110101100101000&\{5,6,7,8\}&000011111010010\\
\{1,3,9,13\}&101010101100110&\{2,7,10,13\}&111000100100100&\{5,6,7,9\}&000011111101110\\
\{1,3,9,14\}&101011101110010&\{2,7,10,14\}&110000100100010&\{5,6,7,10\}&111011100100010\\
\{1,3,9,15\}&101111011010001&\{2,7,10,15\}&010000100100001&\{5,6,7,11\}&000011110110100\\
\{1,3,10,11\}&111011000111010&\{2,7,11,12\}&011110100011000&\{5,6,7,12\}&100011101001000\\
\{1,3,10,12\}&111101001101100&\{2,7,11,13\}&111000111010100&\{5,6,7,13\}&100011110000100\\
\{1,3,10,13\}&101001111110110&\{2,7,11,14\}&010000100010010&\{5,6,7,14\}&000011100100010\\
\{1,3,10,14\}&111001001110010&\{2,7,11,15\}&110000100010001&\{5,6,7,15\}&101011101000001\\
\{1,3,10,15\}&101000000111001&\{2,7,12,13\}&010100100001100&\{5,6,8,9\}&010011011000000\\
\{1,3,11,12\}&111101110111000&\{2,7,12,14\}&011100100001010&\{5,6,8,10\}&100011010100000\\
\{1,3,11,13\}&101110110110100&\{2,7,12,15\}&111100100001001&\{5,6,8,11\}&000011010010000\\
\{1,3,11,14\}&101110100011110&\{2,7,13,14\}&111100100000110&\{5,6,8,12\}&000011110001000\\
\{1,3,11,15\}&111010011010001&\{2,7,13,15\}&011100100000101&\{5,6,8,13\}&100011010011100\\
\{1,3,12,13\}&111010101101100&\{2,7,14,15\}&011111100000011&\{5,6,8,14\}&100011011000110\\
\{1,3,12,14\}&101000000001010&\{2,8,9,10\}&011000011111100&\{5,6,8,15\}&100011011001001\\
\{1,3,12,15\}&101101001101001&\{2,8,9,11\}&010000011111010&\{5,6,9,10\}&000011001100000\\
\{1,3,13,14\}&111001010111110&\{2,8,9,12\}&011101011001010&\{5,6,9,11\}&100011001010000\\
\{1,3,13,15\}&101000000000101&\{2,8,9,13\}&010000011000110&\{5,6,9,12\}&000011101111000\\
\{1,3,14,15\}&111011010100011&\{2,8,9,14\}&011100011100010&\{5,6,9,13\}&000011101000100\\
\{1,4,5,6\}&110111110101100&\{2,8,9,15\}&110000111100001&\{5,6,9,14\}&010011001010110\\
\{1,4,5,7\}&110111110010000&\{2,8,10,11\}&011010010110100&\{5,6,9,15\}&010011101000001\\
\{1,4,5,8\}&100111111000000&\{2,8,10,12\}&011101010101100&\{5,6,10,11\}&010011000110000\\
\{1,4,5,9\}&100111011100100&\{2,8,10,13\}&010001011101110&\{5,6,10,12\}&100111000101000\\
\{1,4,5,10\}&100111101101010&\{2,8,10,14\}&111000010101010&\{5,6,10,13\}&110011100100100\\
\{1,4,5,11\}&100111010111110&\{2,8,10,15\}&010011010101001&\{5,6,10,14\}&110111000100010\\
\{1,4,5,12\}&100111100001100&\{2,8,11,12\}&011000010011010&\{5,6,10,15\}&100011100100001\\
\{1,4,5,13\}&100111101010110&\{2,8,11,13\}&111000010010110&\{5,6,11,12\}&001011100011000\\
\{1,4,5,14\}&110111011010010&\{2,8,11,14\}&011010011010010&\{5,6,11,13\}&010011100010100\\
\{1,4,5,15\}&101110000001001&\{2,8,11,15\}&011100011010001&\{5,6,11,14\}&011011100010010\\
\{1,4,6,7\}&101101111101000&\{2,8,12,13\}&111000011001100&\{5,6,11,15\}&000011100010001\\
\{1,4,6,8\}&110101110111110&\{2,8,12,14\}&010110011001010&\{5,6,12,13\}&010011000001100\\
\{1,4,6,9\}&100111111111100&\{2,8,12,15\}&010010110011001&\{5,6,12,14\}&100011000001010\\
\{1,4,6,10\}&100101111101110&\{2,8,13,14\}&011110010010110&\{5,6,12,15\}&000011000001001\\
\{1,4,6,11\}&100101111010010&\{2,8,13,15\}&111000010100101&\{5,6,13,14\}&000011000000110\\
\{1,4,6,12\}&110101111011000&\{2,8,14,15\}&011101010100011&\{5,6,13,15\}&100011000000101\\
\{1,4,6,13\}&100101100011110&\{2,9,10,11\}&111010001110100&\{5,6,14,15\}&010011000000011\\
\{1,4,6,14\}&100111110011010&\{2,9,10,12\}&111100001111000&\{5,7,8,9\}&110010111000000\\
\{1,4,6,15\}&111101100100001&\{2,9,10,13\}&011100001110100&\{5,7,8,10\}&000010110100000\\
\{1,4,7,8\}&111100111001010&\{2,9,10,14\}&010011001101010&\{5,7,8,11\}&100010110010000\\
\{1,4,7,9\}&101101101000010&\{2,9,10,15\}&010010001110001&\{5,7,8,12\}&000010111111010\\
\{1,4,7,10\}&110100111110000&\{2,9,11,12\}&010101001011010&\{5,7,8,13\}&110110110000100\\
\{1,4,7,11\}&100110101111110&\{2,9,11,13\}&010100101010110&\{5,7,8,14\}&010011110000010\\
\{1,4,7,12\}&100110111101000&\{2,9,11,14\}&011000101110010&\{5,7,8,15\}&001011110000001\\
\{1,4,7,13\}&100110110001110&\{2,9,11,15\}&011001001110001&\{5,7,9,10\}&100010101100000\\
\{1,4,7,14\}&100100111111010&\{2,9,12,13\}&010101011001100&\{5,7,9,11\}&000010101010000\\
\{1,4,7,15\}&100111101011001&\{2,9,12,14\}&011000001101010&\{5,7,9,12\}&100010111001010\\
\{1,4,8,9\}&110110011111010&\{2,9,12,15\}&010011001011001&\{5,7,9,13\}&000010101101100\\
\{1,4,8,10\}&110100110101010&\{2,9,13,14\}&110101001010110&\{5,7,9,14\}&010010101100110\\
\{1,4,8,11\}&111110010011010&\{2,9,13,15\}&010001111000101&\{5,7,9,15\}&010010101101001\\
\{1,4,8,12\}&110110110111000&\{2,9,14,15\}&010010111000011&\{5,7,10,11\}&110010100110000\\
\{1,4,8,13\}&100100010000100&\{2,10,11,12\}&011110000111100&\{5,7,10,12\}&000110100101000\\
\{1,4,8,14\}&100111110100110&\{2,10,11,13\}&011011000110110&\{5,7,10,13\}&011110100100100\\
\{1,4,8,15\}&110110011001001&\{2,10,11,14\}&011001010110010&\{5,7,10,14\}&010110100100010\\
\{1,4,9,10\}&110110101110100&\{2,10,11,15\}&010000000111001&\{5,7,10,15\}&000011111100001\\
\{1,4,9,11\}&100110111010100&\{2,10,12,13\}&110000010101100&\{5,7,11,12\}&100110100011000\\
\{1,4,9,12\}&100100001001000&\{2,10,12,14\}&011110010101010&\{5,7,11,13\}&111110100010100\\
\{1,4,9,13\}&110100101100110&\{2,10,12,15\}&011110001101001&\{5,7,11,14\}&110110100010010\\
\{1,4,9,14\}&100101011110110&\{2,10,13,14\}&111000001100110&\{5,7,11,15\}&010110100010001\\
\{1,4,9,15\}&111110101000001&\{2,10,13,15\}&010010110100101&\{5,7,12,13\}&110010100001100\\
\{1,4,10,11\}&101101110110010&\{2,10,14,15\}&011010110100011&\{5,7,12,14\}&000010100001010\\
\{1,4,10,12\}&100101101111000&\{2,11,12,13\}&011101001011100&\{5,7,12,15\}&100010100001001\\
\{1,4,10,13\}&111111000100100&\{2,11,12,14\}&010100110011010&\{5,7,13,14\}&100010100000110\\
\{1,4,10,14\}&110110111100010&\{2,11,12,15\}&011000001011001&\{5,7,13,15\}&000010100000101\\
\{1,4,10,15\}&100100000100001&\{2,11,13,14\}&011000001010110&\{5,7,14,15\}&110010100000011\\
\{1,4,11,12\}&100111011011000&\{2,11,13,15\}&010011010010101&\{5,8,9,10\}&001010011100100\\
\{1,4,11,13\}&100101110110100&\{2,11,14,15\}&111101010010011&\{5,8,9,11\}&101010011010100\\
\{1,4,11,14\}&100100000010010&\{2,12,13,14\}&010010010001110&\{5,8,9,12\}&010010011101000\\
\{1,4,11,15\}&100111010110001&\{2,12,13,15\}&111001010001101&\{5,8,9,13\}&010010011010100\\
\{1,4,12,13\}&101110001011100&\{2,12,14,15\}&010111010001011&\{5,8,9,14\}&000011011110110\\
\{1,4,12,14\}&101101110001110&\{2,13,14,15\}&011010010000111&\{5,8,9,15\}&010111011010001\\
\{1,4,12,15\}&100111110101001&\{3,4,5,6\}&101111010000100&\{5,8,10,11\}&000010010111000\\
\{1,4,13,14\}&111111101100110&\{3,4,5,7\}&001111111001010&\{5,8,10,12\}&000011010101100\\
\{1,4,13,15\}&101111100000101&\{3,4,5,8\}&001110111011110&\{5,8,10,13\}&101011010100110\\
\{1,4,14,15\}&100111100000011&\{3,4,5,9\}&001111011101110&\{5,8,10,14\}&010010010110010\\
\{1,5,6,7\}&110011110110010&\{3,4,5,10\}&001110011111010&\{5,8,10,15\}&011010011100001\\
\{1,5,6,8\}&100011011111010&\{3,4,5,11\}&001110110111000&\{5,8,11,12\}&110010011011000\\
\{1,5,6,9\}&110011101111110&\{3,4,5,12\}&001111100111010&\{5,8,11,13\}&000010110011100\\
\{1,5,6,10\}&100011000111001&\{3,4,5,13\}&001111101011100&\{5,8,11,14\}&101010110010110\\
\{1,5,6,11\}&100011111011110&\{3,4,5,14\}&001111011010010&\{5,8,11,15\}&001010010110001\\
\{1,5,6,12\}&110011111101000&\{3,4,5,15\}&001110100100001&\{5,8,12,13\}&001011011001100\\
\{1,5,6,13\}&110011110001110&\{3,4,6,7\}&001111110010000&\{5,8,12,14\}&101010010001110\\
\{1,5,6,14\}&100011111100010&\{3,4,6,8\}&001101111011000&\{5,8,12,15\}&000010111001001\\
\{1,5,6,15\}&111011100010001&\{3,4,6,9\}&001101111100100&\{5,8,13,14\}&010010110010110\\
\{1,5,7,8\}&100110110110010&\{3,4,6,10\}&001101011111100&\{5,8,13,15\}&000011011000101\\
\{1,5,7,9\}&110011111010100&\{3,4,6,11\}&101101100011000&\{5,8,14,15\}&001011011000011\\
\{1,5,7,10\}&100011101110100&\{3,4,6,12\}&001101100101000&\{5,9,10,11\}&100010001111000\\
\{1,5,7,11\}&100010101011100&\{3,4,6,13\}&001111110101100&\{5,9,10,12\}&011010001111000\\
\{1,5,7,12\}&111110100101000&\{3,4,6,14\}&001111111110110&\{5,9,10,13\}&100011001101100\\
\{1,5,7,13\}&100010111110110&\{3,4,6,15\}&101101110000001&\{5,9,10,14\}&001010101101010\\
\{1,5,7,14\}&111010111000110&\{3,4,7,8\}&001100111001100&\{5,9,10,15\}&101010001110001\\
\{1,5,7,15\}&110110111010001&\{3,4,7,9\}&001100111110000&\{5,9,11,12\}&000011001011100\\
\{1,5,8,9\}&100110011110000&\{3,4,7,10\}&001101110111110&\{5,9,11,13\}&110010101010110\\
\{1,5,8,10\}&100010011101110&\{3,4,7,11\}&001100110010110&\{5,9,11,14\}&110010001110010\\
\{1,5,8,11\}&110111010110100&\{3,4,7,12\}&001100100111100&\{5,9,11,15\}&000111001110001\\
\{1,5,8,12\}&100010010001000&\{3,4,7,13\}&001110101110100&\{5,9,12,13\}&000010011011110\\
\{1,5,8,13\}&110010010111110&\{3,4,7,14\}&001110111100010&\{5,9,12,14\}&001010001001110\\
\{1,5,8,14\}&101011010011010&\{3,4,7,15\}&001110111010001&\{5,9,12,15\}&000111101101001\\
\{1,5,8,15\}&100011111010001&\{3,4,8,9\}&101100011011000&\{5,9,13,14\}&010011101001110\\
\{1,5,9,10\}&110011011110000&\{3,4,8,10\}&001110010100000&\{5,9,13,15\}&101110011000101\\
\{1,5,9,11\}&110011001011010&\{3,4,8,11\}&101110010010000&\{5,9,14,15\}&101010111000011\\
\{1,5,9,12\}&100111001001110&\{3,4,8,12\}&111100010001000&\{5,10,11,12\}&001011000111100\\
\{1,5,9,13\}&100010001000100&\{3,4,8,13\}&011100010000100&\{5,10,11,13\}&100011000110110\\
\{1,5,9,14\}&101010101011010&\{3,4,8,14\}&101100010000010&\{5,10,11,14\}&100010100111010\\
\{1,5,9,15\}&111010111001001&\{3,4,8,15\}&001100010000001&\{5,10,11,15\}&110010010110001\\
\{1,5,10,11\}&101010001111110&\{3,4,9,10\}&001101101110010&\{5,10,12,13\}&000010000101110\\
\{1,5,10,12\}&110111001111000&\{3,4,9,11\}&101101001011010&\{5,10,12,14\}&000011000111010\\
\{1,5,10,13\}&100011100101110&\{3,4,9,12\}&011100001001000&\{5,10,12,15\}&110011001101001\\
\{1,5,10,14\}&100010000100010&\{3,4,9,13\}&111100001000100&\{5,10,13,14\}&000110010100110\\
\{1,5,10,15\}&111011111100001&\{3,4,9,14\}&001100001000010&\{5,10,13,15\}&011110010100101\\
\{1,5,11,12\}&110111000011110&\{3,4,9,15\}&101100001000001&\{5,10,14,15\}&010110010100011\\
\{1,5,11,13\}&111010100110110&\{3,4,10,11\}&111110000110000&\{5,11,12,13\}&000011100011110\\
\{1,5,11,14\}&101011001010110&\{3,4,10,12\}&101100000101000&\{5,11,12,14\}&000110010011010\\
\{1,5,11,15\}&100010000010001&\{3,4,10,13\}&001100000100100&\{5,11,12,15\}&001011010011001\\
\{1,5,12,13\}&111010011011110&\{3,4,10,14\}&111100000100010&\{5,11,13,14\}&100010000011110\\
\{1,5,12,14\}&110011010101010&\{3,4,10,15\}&001110000111001&\{5,11,13,15\}&000110010010101\\
\{1,5,12,15\}&101010110011001&\{3,4,11,12\}&001100000011000&\{5,11,14,15\}&001110010010011\\
\{1,5,13,14\}&110010110100110&\{3,4,11,13\}&101100000010100&\{5,12,13,14\}&110010001001110\\
\{1,5,13,15\}&111011011000101&\{3,4,11,14\}&011100000010010&\{5,12,13,15\}&110010010001101\\
\{1,5,14,15\}&110011011000011&\{3,4,11,15\}&111100000010001&\{5,12,14,15\}&111010010001011\\
\{1,6,7,8\}&110001110011100&\{3,4,12,13\}&111110000001100&\{5,13,14,15\}&100010010000111\\
\{1,6,7,9\}&110101101001110&\{3,4,12,14\}&101111100001010&\{6,7,8,9\}&000001111000000\\
\{1,6,7,10\}&100001111110000&\{3,4,12,15\}&001111100001001&\{6,7,8,10\}&000001111111100\\
\{1,6,7,11\}&100011110111000&\{3,4,13,14\}&001111100000110&\{6,7,8,11\}&010001110010000\\
\{1,6,7,12\}&111001110011010&\{3,4,13,15\}&001110000000101&\{6,7,8,12\}&011101110001000\\
\{1,6,7,13\}&110001111000110&\{3,4,14,15\}&111110000000011&\{6,7,8,13\}&011001111001100\\
\{1,6,7,14\}&101001111001010&\{3,5,6,7\}&001011101111110&\{6,7,8,14\}&001101110000010\\
\{1,6,7,15\}&100001101101001&\{3,5,6,8\}&001011110110010&\{6,7,8,15\}&000111110011001\\
\{1,6,8,9\}&111001011100100&\{3,5,6,9\}&101011001101010&\{6,7,9,10\}&010001101100000\\
\{1,6,8,10\}&111011010101100&\{3,5,6,10\}&001011111101000&\{6,7,9,11\}&000001101010110\\
\{1,6,8,11\}&111101010011100&\{3,5,6,11\}&101011100010100&\{6,7,9,12\}&001001101101100\\
\{1,6,8,12\}&110001011011110&\{3,5,6,12\}&101011100101000&\{6,7,9,13\}&100101101000100\\
\{1,6,8,13\}&101001110101100&\{3,5,6,13\}&001011110001110&\{6,7,9,14\}&010101101000010\\
\{1,6,8,14\}&111101011000110&\{3,5,6,14\}&001011101000010&\{6,7,9,15\}&110101101000001\\
\{1,6,8,15\}&100001010000001&\{3,5,6,15\}&101111000100001&\{6,7,10,11\}&000001100110000\\
\{1,6,9,10\}&100111001110010&\{3,5,7,8\}&001011111010100&\{6,7,10,12\}&010011100101000\\
\{1,6,9,11\}&100001001111110&\{3,5,7,9\}&001010111111100&\{6,7,10,13\}&101101100100100\\
\{1,6,9,12\}&110001101101100&\{3,5,7,10\}&001111101100000&\{6,7,10,14\}&100101100100010\\
\{1,6,9,13\}&101101011001100&\{3,5,7,11\}&011010110010000&\{6,7,10,15\}&011011100100001\\
\{1,6,9,14\}&100001001000010&\{3,5,7,12\}&101110110001000&\{6,7,11,12\}&110011100011000\\
\{1,6,9,15\}&101111111001001&\{3,5,7,13\}&001011100100100&\{6,7,11,13\}&110101100010100\\
\{1,6,10,11\}&110001100110110&\{3,5,7,14\}&001010110100110&\{6,7,11,14\}&100011100010010\\
\{1,6,10,12\}&110001000101110&\{3,5,7,15\}&101110100010001&\{6,7,11,15\}&100101100010001\\
\{1,6,10,13\}&100001000100100&\{3,5,8,9\}&001010111000000&\{6,7,12,13\}&000001100001100\\
\{1,6,10,14\}&101001011101110&\{3,5,8,10\}&001011011110000&\{6,7,12,14\}&110001100001010\\
\{1,6,10,15\}&110101010101001&\{3,5,8,11\}&001010010111110&\{6,7,12,15\}&010001100001001\\
\{1,6,11,12\}&100001000011000&\{3,5,8,12\}&011010010001000&\{6,7,13,14\}&010001100000110\\
\{1,6,11,13\}&110011000111100&\{3,5,8,13\}&111010010000100&\{6,7,13,15\}&110001100000101\\
\{1,6,11,14\}&111001101010110&\{3,5,8,14\}&001010010000010&\{6,7,14,15\}&000001100000011\\
\{1,6,11,15\}&110101110110001&\{3,5,8,15\}&101010010000001&\{6,8,9,10\}&011001011101000\\
\{1,6,12,13\}&101101000111100&\{3,5,9,10\}&101110001100000&\{6,8,9,11\}&100001011010100\\
\{1,6,12,14\}&111101000001010&\{3,5,9,11\}&011011001010000&\{6,8,9,12\}&100001011101000\\
\{1,6,12,15\}&110011010011001&\{3,5,9,12\}&111010001001000&\{6,8,9,13\}&000101011000110\\
\{1,6,13,14\}&101011101001110&\{3,5,9,13\}&011010001000100&\{6,8,9,14\}&110001011100010\\
\{1,6,13,15\}&100001110100101&\{3,5,9,14\}&101010001000010&\{6,8,9,15\}&110001011010001\\
\{1,6,14,15\}&110101000000011&\{3,5,9,15\}&001010001000001&\{6,8,10,11\}&101001010110100\\
\{1,7,8,9\}&111000111101000&\{3,5,10,11\}&001010100110000&\{6,8,10,12\}&000001010111110\\
\{1,7,8,10\}&111010110100000&\{3,5,10,12\}&001010000101000&\{6,8,10,13\}&000001011100100\\
\{1,7,8,11\}&101000110111000&\{3,5,10,13\}&101010000100100&\{6,8,10,14\}&100001010110010\\
\{1,7,8,12\}&110000111101110&\{3,5,10,14\}&011010000100010&\{6,8,10,15\}&010001011100001\\
\{1,7,8,13\}&101010111001100&\{3,5,10,15\}&111010000100001&\{6,8,11,12\}&000001110011010\\
\{1,7,8,14\}&100000110000010&\{3,5,11,12\}&101010000011000&\{6,8,11,13\}&100001110010110\\
\{1,7,8,15\}&101011110110001&\{3,5,11,13\}&001010000010100&\{6,8,11,14\}&101001011010010\\
\{1,7,9,10\}&110010101101010&\{3,5,11,14\}&111010000010010&\{6,8,11,15\}&000001010110001\\
\{1,7,9,11\}&110000101111000&\{3,5,11,15\}&011010000010001&\{6,8,12,13\}&100001010001110\\
\{1,7,9,12\}&101100101101010&\{3,5,12,13\}&101011000001100&\{6,8,12,14\}&000011011001010\\
\{1,7,9,13\}&110100111001100&\{3,5,12,14\}&001010110011010&\{6,8,12,15\}&001001111001001\\
\{1,7,9,14\}&110011101000010&\{3,5,12,15\}&101010101101001&\{6,8,13,14\}&010101010010110\\
\{1,7,9,15\}&100000101000001&\{3,5,13,14\}&101110000000110&\{6,8,13,15\}&101001111000101\\
\{1,7,10,11\}&111000110110010&\{3,5,13,15\}&011011000000101&\{6,8,14,15\}&010101011000011\\
\{1,7,10,12\}&100000100101000&\{3,5,14,15\}&101011000000011&\{6,9,10,11\}&101001001111000\\
\{1,7,10,13\}&110100100111100&\{3,6,7,8\}&101011110000010&\{6,9,10,12\}&010001001111000\\
\{1,7,10,14\}&100000110111110&\{3,6,7,9\}&001001101010000&\{6,9,10,13\}&001001001110100\\
\{1,7,10,15\}&110010110101001&\{3,6,7,10\}&101001101100000&\{6,9,10,14\}&001001011100010\\
\{1,7,11,12\}&101100110011010&\{3,6,7,11\}&001001110011100&\{6,9,10,15\}&101001011100001\\
\{1,7,11,13\}&100000100010100&\{3,6,7,12\}&001001111111010&\{6,9,11,12\}&000001011011000\\
\{1,7,11,14\}&110100101011010&\{3,6,7,13\}&101001100000110&\{6,9,11,13\}&011001011010100\\
\{1,7,11,15\}&101100101011001&\{3,6,7,14\}&001001100001010&\{6,9,11,14\}&010001011010010\\
\{1,7,12,13\}&111010110011100&\{3,6,7,15\}&001101101000001&\{6,9,11,15\}&001001011010001\\
\{1,7,12,14\}&110110100101110&\{3,6,8,9\}&001101011000000&\{6,9,12,13\}&000001001001110\\
\{1,7,12,15\}&110111100001001&\{3,6,8,10\}&001001110100000&\{6,9,12,14\}&100001101011010\\
\{1,7,13,14\}&110100110010110&\{3,6,8,11\}&001001011011110&\{6,9,12,15\}&001011001101001\\
\{1,7,13,15\}&101010110100101&\{3,6,8,12\}&101001010001000&\{6,9,13,14\}&100001101100110\\
\{1,7,14,15\}&101100100000011&\{3,6,8,13\}&001001010000100&\{6,9,13,15\}&100101011000101\\
\{1,8,9,10\}&100000011111100&\{3,6,8,14\}&111001010000010&\{6,9,14,15\}&011001111000011\\
\{1,8,9,11\}&110000011110110&\{3,6,8,15\}&011001010000001&\{6,10,11,12\}&010001100111010\\
\{1,8,9,12\}&101000011111010&\{3,6,9,10\}&101101001100110&\{6,10,11,13\}&100001100111100\\
\{1,8,9,13\}&100100011011110&\{3,6,9,11\}&001111001111000&\{6,10,11,14\}&000001001110010\\
\{1,8,9,14\}&101110011001010&\{3,6,9,12\}&001001001001000&\{6,10,11,15\}&100001001110001\\
\{1,8,9,15\}&110001111001001&\{3,6,9,13\}&101001001000100&\{6,10,12,13\}&001001000101110\\
\{1,8,10,11\}&111000011110000&\{3,6,9,14\}&011001001000010&\{6,10,12,14\}&000001101101010\\
\{1,8,10,12\}&101010110101010&\{3,6,9,15\}&111001001000001&\{6,10,12,15\}&001101010101001\\
\{1,8,10,13\}&110101010100110&\{3,6,10,11\}&101011000110000&\{6,10,13,14\}&000001110100110\\
\{1,8,10,14\}&111001110100110&\{3,6,10,12\}&111001000101000&\{6,10,13,15\}&001011010100101\\
\{1,8,10,15\}&111110010101001&\{3,6,10,13\}&011001000100100&\{6,10,14,15\}&000011010100011\\
\{1,8,11,12\}&110101010011010&\{3,6,10,14\}&101001000100010&\{6,11,12,13\}&010001000011110\\
\{1,8,11,13\}&111100010110100&\{3,6,10,15\}&001001000100001&\{6,11,12,14\}&101001100111010\\
\{1,8,11,14\}&100110010010110&\{3,6,11,12\}&011001000011000&\{6,11,12,15\}&000001101011001\\
\{1,8,11,15\}&101100010110001&\{3,6,11,13\}&001101100010100&\{6,11,13,14\}&101001000011110\\
\{1,8,12,13\}&110011011001100&\{3,6,11,14\}&001001000010010&\{6,11,13,15\}&110101010010101\\
\{1,8,12,14\}&110010110011010&\{3,6,11,15\}&101001000010001&\{6,11,14,15\}&000101010010011\\
\{1,8,12,15\}&101000011001001&\{3,6,12,13\}&001101000001100&\{6,12,13,14\}&011001010001110\\
\{1,8,13,14\}&110100010001110&\{3,6,12,14\}&001011010101010&\{6,12,13,15\}&000001010001101\\
\{1,8,13,15\}&100110010100101&\{3,6,12,15\}&011101000001001&\{6,12,14,15\}&001001010001011\\
\{1,8,14,15\}&100010110100011&\{3,6,13,14\}&011101000000110&\{6,13,14,15\}&010001010000111\\
\{1,9,10,11\}&110100001111110&\{3,6,13,15\}&111101000000101&\{7,8,9,10\}&000000111101000\\
\{1,9,10,12\}&110101001101010&\{3,6,14,15\}&001101000000011&\{7,8,9,11\}&001000111010010\\
\{1,9,10,13\}&110010011100100&\{3,7,8,9\}&101100111000000&\{7,8,9,12\}&011000111011000\\
\{1,9,10,14\}&110011001100110&\{3,7,8,10\}&001000111101110&\{7,8,9,13\}&100001111001100\\
\{1,9,10,15\}&110100101101001&\{3,7,8,11\}&101001110010000&\{7,8,9,14\}&000010111000110\\
\{1,9,11,12\}&111011001011100&\{3,7,8,12\}&001000110001000&\{7,8,9,15\}&001000111100001\\
\{1,9,11,13\}&101000101110100&\{3,7,8,13\}&101000110000100&\{7,8,10,11\}&001000110110100\\
\{1,9,11,14\}&101100001110010&\{3,7,8,14\}&011000110000010&\{7,8,10,12\}&010000110111000\\
\{1,9,11,15\}&101011001011001&\{3,7,8,15\}&111000110000001&\{7,8,10,13\}&100000111100100\\
\{1,9,12,13\}&110110001101100&\{3,7,9,10\}&011010101100000&\{7,8,10,14\}&010000111100010\\
\{1,9,12,14\}&111110001101010&\{3,7,9,11\}&001000101111000&\{7,8,10,15\}&011000110110001\\
\{1,9,12,15\}&111000001101001&\{3,7,9,12\}&101000101001000&\{7,8,11,12\}&100000111011000\\
\{1,9,13,14\}&111110001010110&\{3,7,9,13\}&001000101000100&\{7,8,11,13\}&000000111010100\\
\{1,9,13,15\}&111100111000101&\{3,7,9,14\}&111000101000010&\{7,8,11,14\}&000000110110010\\
\{1,9,14,15\}&110100111000011&\{3,7,9,15\}&011000101000001&\{7,8,11,15\}&010000111010001\\
\{1,10,11,12\}&101010100111100&\{3,7,10,11\}&101100100110000&\{7,8,12,13\}&111000110001110\\
\{1,10,11,13\}&110110000110110&\{3,7,10,12\}&011000100101000&\{7,8,12,14\}&000000110001110\\
\{1,10,11,14\}&101110000111010&\{3,7,10,13\}&101100110100110&\{7,8,12,15\}&000001110101001\\
\{1,10,11,15\}&101101101110001&\{3,7,10,14\}&001000100100010&\{7,8,13,14\}&001001111000110\\
\{1,10,12,13\}&100010110101100&\{3,7,10,15\}&101000100100001&\{7,8,13,15\}&100010111000101\\
\{1,10,12,14\}&101101010101010&\{3,7,11,12\}&111000100011000&\{7,8,14,15\}&100001111000011\\
\{1,10,12,15\}&111001110101001&\{3,7,11,13\}&011000100010100&\{7,9,10,11\}&100000101110010\\
\{1,10,13,14\}&111010000101110&\{3,7,11,14\}&101000100010010&\{7,9,10,12\}&000000101111110\\
\{1,10,13,15\}&110011010100101&\{3,7,11,15\}&001000100010001&\{7,9,10,13\}&001100101100110\\
\{1,10,14,15\}&111100110100011&\{3,7,12,13\}&001010100001100&\{7,9,10,14\}&101000111100010\\
\{1,11,12,13\}&111100000011110&\{3,7,12,14\}&101000100101110&\{7,9,10,15\}&111000101110001\\
\{1,11,12,14\}&111100100111010&\{3,7,12,15\}&101001100001001&\{7,9,11,12\}&000100101011100\\
\{1,11,12,15\}&101101010011001&\{3,7,13,14\}&011010100000110&\{7,9,11,13\}&010000101110100\\
\{1,11,13,14\}&110011010010110&\{3,7,13,15\}&001001100000101&\{7,9,11,14\}&001100101011010\\
\{1,11,13,15\}&111110010010101&\{3,7,14,15\}&001010100000011&\{7,9,11,15\}&101000111010001\\
\{1,11,14,15\}&100011010010011&\{3,8,9,10\}&101100011100100&\{7,9,12,13\}&000110101001110\\
\{1,12,13,14\}&111001001001110&\{3,8,9,11\}&001010011011000&\{7,9,12,14\}&100000101001110\\
\{1,12,13,15\}&100111010001101&\{3,8,9,12\}&101010011101000&\{7,9,12,15\}&100100111001001\\
\{1,12,14,15\}&110001010001011&\{3,8,9,13\}&011011011000110&\{7,9,13,14\}&100100111000110\\
\{1,13,14,15\}&110111010000111&\{3,8,9,14\}&001000011110110&\{7,9,13,15\}&000100111000101\\
\{2,3,4,5\}&011110000000000&\{3,8,9,15\}&001110011001001&\{7,9,14,15\}&001100111000011\\
\{2,3,4,6\}&111101001010000&\{3,8,10,11\}&011100010111000&\{7,10,11,12\}&000100100111010\\
\{2,3,4,7\}&011111100001100&\{3,8,10,12\}&001100011101000&\{7,10,11,13\}&000010100110110\\
\{2,3,4,8\}&011110111010100&\{3,8,10,13\}&001101010100110&\{7,10,11,14\}&001001100110110\\
\{2,3,4,9\}&011111101101010&\{3,8,10,14\}&101010010110010&\{7,10,11,15\}&000000101110001\\
\{2,3,4,10\}&011110110110010&\{3,8,10,15\}&001111011100001&\{7,10,12,13\}&100100101101100\\
\{2,3,4,11\}&011111110011010&\{3,8,11,12\}&001001010111000&\{7,10,12,14\}&100001110101010\\
\{2,3,4,12\}&011111011011000&\{3,8,11,13\}&001011010010110&\{7,10,12,15\}&010100110101001\\
\{2,3,4,13\}&011111101010110&\{3,8,11,14\}&001101010011010&\{7,10,13,14\}&010000100101110\\
\{2,3,4,14\}&011110001011010&\{3,8,11,15\}&001101110110001&\{7,10,13,15\}&001100110100101\\
\{2,3,4,15\}&011100111001001&\{3,8,12,13\}&101000010011100&\{7,10,14,15\}&000100110100011\\
\{2,3,5,6\}&011111100110000&\{3,8,12,14\}&001000011001010&\{7,11,12,13\}&001000100011110\\
\{2,3,5,7\}&011011100101110&\{3,8,12,15\}&001010110101001&\{7,11,12,14\}&110000100011110\\
\{2,3,5,8\}&111011010010000&\{3,8,13,14\}&011000010100110&\{7,11,12,15\}&100001110011001\\
\{2,3,5,9\}&011011111011110&\{3,8,13,15\}&001000011000101&\{7,11,13,14\}&100100100110110\\
\{2,3,5,10\}&011111011100100&\{3,8,14,15\}&011110011000011&\{7,11,13,15\}&010100110010101\\
\{2,3,5,11\}&011011011111010&\{3,9,10,11\}&001100001111110&\{7,11,14,15\}&100100110010011\\
\{2,3,5,12\}&011011110111000&\{3,9,10,12\}&001101001101010&\{7,12,13,14\}&011000101001110\\
\{2,3,5,13\}&011111111111100&\{3,9,10,13\}&101000001101100&\{7,12,13,15\}&100000110001101\\
\{2,3,5,14\}&011110101111110&\{3,9,10,14\}&001011001100110&\{7,12,14,15\}&010000110001011\\
\{2,3,5,15\}&011111110101001&\{3,9,10,15\}&001100101101001&\{7,13,14,15\}&110000110000111\\
\{2,3,6,7\}&011001100000000&\{3,9,11,12\}&001011001011010&\{8,9,10,11\}&000000011110000\\
\{2,3,6,8\}&011101011110110&\{3,9,11,13\}&001010101010110&\{8,9,10,12\}&110100011101000\\
\{2,3,6,9\}&011011101110100&\{3,9,11,14\}&101100101010110&\{8,9,10,13\}&010100011100100\\
\{2,3,6,10\}&111101010100000&\{3,9,11,15\}&001011101110001&\{8,9,10,14\}&000010011100010\\
\{2,3,6,11\}&011001110010110&\{3,9,12,13\}&001000001011100&\{8,9,10,15\}&100010011100001\\
\{2,3,6,12\}&011111010111110&\{3,9,12,14\}&101100001001110&\{8,9,11,12\}&010100011011000\\
\{2,3,6,13\}&011101111101110&\{3,9,12,15\}&011011011001001&\{8,9,11,13\}&110100011010100\\
\{2,3,6,14\}&011111110100110&\{3,9,13,14\}&001101101001110&\{8,9,11,14\}&100010011010010\\
\{2,3,6,15\}&011111001000001&\{3,9,13,15\}&011101011000101&\{8,9,11,15\}&000010011010001\\
\{2,3,7,8\}&111100110010000&\{3,9,14,15\}&111000011000011&\{8,9,12,13\}&000000011001100\\
\{2,3,7,9\}&011001111110000&\{3,10,11,12\}&111000000111100&\{8,9,12,14\}&110000011001010\\
\{2,3,7,10\}&011100110100000&\{3,10,11,13\}&001111010110100&\{8,9,12,15\}&010000011001001\\
\{2,3,7,11\}&111001100110000&\{3,10,11,14\}&001010001110010&\{8,9,13,14\}&101000011000110\\
\{2,3,7,12\}&011000110111110&\{3,10,11,15\}&001100001110001&\{8,9,13,15\}&110000011000101\\
\{2,3,7,13\}&011100110011100&\{3,10,12,13\}&101110010101100&\{8,9,14,15\}&000000011000011\\
\{2,3,7,14\}&011010111110110&\{3,10,12,14\}&001000000111010&\{8,10,11,12\}&100100010111000\\
\{2,3,7,15\}&011011111010001&\{3,10,12,15\}&011000010101001&\{8,10,11,13\}&100010010110100\\
\{2,3,8,9\}&011000011000000&\{3,10,13,14\}&101000000110110&\{8,10,11,14\}&001100010110010\\
\{2,3,8,10\}&011011010100000&\{3,10,13,15\}&101101010100101&\{8,10,11,15\}&100000110110001\\
\{2,3,8,11\}&011101110110100&\{3,10,14,15\}&001000010100011&\{8,10,12,13\}&001000010101100\\
\{2,3,8,12\}&011100111111010&\{3,11,12,13\}&001111000011110&\{8,10,12,14\}&000000010101010\\
\{2,3,8,13\}&011100011011110&\{3,11,12,14\}&111000001011010&\{8,10,12,15\}&100000010101001\\
\{2,3,8,14\}&011010011101110&\{3,11,12,15\}&001010101011001&\{8,10,13,14\}&100000010100110\\
\{2,3,8,15\}&111000010011001&\{3,11,13,14\}&101101010010110&\{8,10,13,15\}&000000010100101\\
\{2,3,9,10\}&011110001100110&\{3,11,13,15\}&011000010010101&\{8,10,14,15\}&110000010100011\\
\{2,3,9,11\}&011101111010010&\{3,11,14,15\}&101000010010011&\{8,11,12,13\}&010000010011100\\
\{2,3,9,12\}&011111001001110&\{3,12,13,14\}&001110100101110&\{8,11,12,14\}&100000010011010\\
\{2,3,9,13\}&011110011001100&\{3,12,13,15\}&001010010001101&\{8,11,12,15\}&000000010011001\\
\{2,3,9,14\}&011001101011010&\{3,12,14,15\}&011100010001011&\{8,11,13,14\}&000000010010110\\
\{2,3,9,15\}&011111101011001&\{3,13,14,15\}&101001010000111&\{8,11,13,15\}&100000010010101\\
\{2,3,10,11\}&011000000110000&\{4,5,6,7\}&000111100000000&\{8,11,14,15\}&010000010010011\\
\{2,3,10,12\}&011001100111100&\{4,5,6,8\}&000111110010110&\{8,12,13,14\}&001100010001110\\
\{2,3,10,13\}&011100100110110&\{4,5,6,9\}&010111001001000&\{8,12,13,15\}&010100010001101\\
\{2,3,10,14\}&011101100100010&\{4,5,6,10\}&000111101100110&\{8,12,14,15\}&000010010001011\\
\{2,3,10,15\}&011001101101001&\{4,5,6,11\}&000111100111100&\{8,13,14,15\}&000100010000111\\
\{2,3,11,12\}&011010101011100&\{4,5,6,12\}&000111110101010&\{9,10,11,12\}&000100001111000\\
\{2,3,11,13\}&011101100011110&\{4,5,6,13\}&000111111001100&\{9,10,11,13\}&000010001110100\\
\{2,3,11,14\}&011010100111010&\{4,5,6,14\}&000111001111110&\{9,10,11,14\}&010100001110010\\
\{2,3,11,15\}&011110010011001&\{4,5,6,15\}&010111000100001&\{9,10,11,15\}&110100001110001\\
\{2,3,12,13\}&011000000001100&\{4,5,7,8\}&000110111011000&\{9,10,12,13\}&010000001101100\\
\{2,3,12,14\}&011001110101010&\{4,5,7,9\}&000110111100100&\{9,10,12,14\}&100000001101010\\
\{2,3,12,15\}&011010100001001&\{4,5,7,10\}&000111111110000&\{9,10,12,15\}&000000001101001\\
\{2,3,13,14\}&011001101100110&\{4,5,7,11\}&100111100110000&\{9,10,13,14\}&000000001100110\\
\{2,3,13,15\}&111010100000101&\{4,5,7,12\}&110110101001000&\{9,10,13,15\}&100000001100101\\
\{2,3,14,15\}&011000000000011&\{4,5,7,13\}&001110110000100&\{9,10,14,15\}&010000001100011\\
\{2,4,5,6\}&011111111000000&\{4,5,7,14\}&000110110111110&\{9,11,12,13\}&110000001011100\\
\{2,4,5,7\}&010110111010010&\{4,5,7,15\}&100110101110001&\{9,11,12,14\}&000000001011010\\
\{2,4,5,8\}&110111010001000&\{4,5,8,9\}&000110011000000&\{9,11,12,15\}&100000001011001\\
\{2,4,5,9\}&010110111101110&\{4,5,8,10\}&110110010100000&\{9,11,13,14\}&100000001010110\\
\{2,4,5,10\}&010111101101100&\{4,5,8,11\}&010110010010000&\{9,11,13,15\}&000000001010101\\
\{2,4,5,11\}&010111111111010&\{4,5,8,12\}&001111010001000&\{9,11,14,15\}&110000001010011\\
\{2,4,5,12\}&010111110011100&\{4,5,8,13\}&000110011111100&\{9,12,13,14\}&010100001001110\\
\{2,4,5,13\}&010110011110110&\{4,5,8,14\}&011111010000010&\{9,12,13,15\}&110100001001101\\
\{2,4,5,14\}&010111100110110&\{4,5,8,15\}&100110110000001&\{9,12,14,15\}&000100001001011\\
\{2,4,5,15\}&110111000010001&\{4,5,9,10\}&010110001100000&\{9,13,14,15\}&100100001000111\\
\{2,4,6,7\}&010101111010100&\{4,5,9,11\}&001110001010000&\{10,11,12,13\}&000000000111100\\
\{2,4,6,8\}&010111111000110&\{4,5,9,12\}&000111011101000&\{10,11,12,14\}&110000000111010\\
\{2,4,6,9\}&010101111101000&\{4,5,9,13\}&001110001101100&\{10,11,12,15\}&000101000111001\\
\{2,4,6,10\}&010111110100000&\{4,5,9,14\}&100110101000010&\{10,11,13,14\}&010000000110110\\
\{2,4,6,11\}&010101011110000&\{4,5,9,15\}&100110001101001&\{10,11,13,15\}&110000000110101\\
\{2,4,6,12\}&010111100001010&\{4,5,10,11\}&000110000110000&\{10,11,14,15\}&000000000110011\\
\{2,4,6,13\}&010101110001110&\{4,5,10,12\}&011111000101000&\{10,12,13,14\}&100100000101110\\
\{2,4,6,14\}&010101001100110&\{4,5,10,13\}&000111000100100&\{10,12,13,15\}&000100000101101\\
\{2,4,6,15\}&011101111100001&\{4,5,10,14\}&000111010110010&\{10,12,14,15\}&110100000101011\\
\{2,4,7,8\}&010110110110100&\{4,5,10,15\}&000110010101001&\{10,13,14,15\}&010100000100111\\
\{2,4,7,9\}&011100101010000&\{4,5,11,12\}&000111000011000&\{11,12,13,14\}&011010000011110\\
\{2,4,7,10\}&111100101100000&\{4,5,11,13\}&100111000010100&\{11,12,13,15\}&100100000011101\\
\{2,4,7,11\}&011101101111000&\{4,5,11,14\}&101111000010010&\{11,12,14,15\}&010100000011011\\
\{2,4,7,12\}&011110111101000&\{4,5,11,15\}&001111000010001&\{11,13,14,15\}&110100000010111\\
\{2,4,7,13\}&010100111111100&\{4,5,12,13\}&000110000001100&\{12,13,14,15\}&000000000001111\\
\hline
\end{longtable}
\end{document}